\newtheorem{theorem}{Theorem}
\newtheorem{lemma}{Lemma}
\newtheorem{observation}{Observation}
\newcounter{alg}
\newcommand{\eps}{\ensuremath{\varepsilon}}
\newcommand{\rpay}[2]{\ensuremath{\mathcal{R}(#1,#2)}\xspace}
\newcommand{\cpay}[2]{\ensuremath{\mathcal{C}(#1,#2)}\xspace}
\newcommand{\argmax}{\ensuremath{\text{argmax}}\xspace}
\newcommand{\simplex}{\ensuremath{\Delta^n}\xspace}
\newcommand{\supp}{\mathrm{supp}}
\newcommand{\poly}{\mathrm{poly}}
\newcommand{\xbf}{\ensuremath{\mathbf{x}}\xspace}
\newcommand{\ybf}{\ensuremath{\mathbf{y}}\xspace}
\newcommand{\xbfs}{\ensuremath{\mathbf{x}^*}\xspace}
\newcommand{\ybfs}{\ensuremath{\mathbf{y}^*}\xspace}
\newcommand{\xbfh}{\ensuremath{\hat{\mathbf{x}}}\xspace}
\newcommand{\ybfh}{\ensuremath{\hat{\mathbf{y}}}\xspace}
\newcommand{\xbfp}{\ensuremath{\mathbf{x}'}\xspace}
\newcommand{\ybfp}{\ensuremath{\mathbf{y}'}\xspace}
\newcommand{\zbf}{\ensuremath{\mathbf{z}}\xspace}
\newcommand{\wbf}{\ensuremath{\mathbf{w}}\xspace}
\definecolor{darkgreen}{rgb}{0,0.6,0}
\newcommand{\kibitz}[2]{\ifnum\Comments=1{\color{#1}{#2}}\fi}
\begin{document}
\title{\bf A Polynomial-Time Algorithm for 1/2-Well-Supported Nash Equilibria in Bimatrix Games}

 \author{Argyrios Deligkas$^1$, Michail Fasoulakis$^{2,3}$, Evangelos Markakis$^3$\\
 Royal Holloway, University of London$^1$\\
 Foundation for Research and Technology-Hellas (FORTH)$^2$\\
 Athens University of Economics and Business$^3$}
\date{}

\maketitle
\begin{abstract}
Since the seminal PPAD-completeness result for computing a Nash equilibrium even in two-player games, an important line of research has focused on relaxations 
achievable in polynomial time.
In this paper, we consider the notion of $\eps$-well-supported Nash equilibrium, where $\eps \in [0,1]$ corresponds to the approximation guarantee.
Put simply, in an $\eps$-well-supported equilibrium, every player chooses with positive probability actions that are within $\eps$ of the maximum achievable payoff, against the other player's strategy.
Ever since the initial approximation guarantee of 2/3 for well-supported equilibria, which was established more than a decade ago, the progress on this problem has been extremely slow and incremental. 
Notably, the small improvements to 0.6608, and finally to 0.6528, were achieved by algorithms of growing complexity.
Our main result is a simple and intuitive algorithm, that improves the approximation guarantee to 1/2. 
Our algorithm is based on linear programming and in particular on exploiting suitably defined zero-sum games that arise from the payoff matrices of the two players.
As a byproduct, we show how to achieve the same approximation guarantee in a query-efficient way.
\end{abstract}

\setcounter{page}{1}
\section{Introduction}
The computation of a Nash equilibrium has been one of the most fundamental problems in the intersection of computer science and economics. The non-constructive proof of existence by Nash for finite games \cite{Nas51}, motivated a significant volume of works to focus on the quest for efficient algorithms. 
This quest was terminated by the landmark results of \cite{DGP06,CDT09}, establishing PPAD-completeness, even for 
two-player games. 

The above intractability results naturally led to the study of polynomial-time algorithms for approximate Nash equilibria.
But as this is not a typical optimization problem with a single objective to optimize, there exist in the literature different ways of defining approximate solutions. Among them, the two most prevalent notions of approximability are those of $\eps$-Nash equilibria and $\eps$-well-supported Nash equilibria ($\eps$-WSNE), for $\eps>0$, assuming that all payoffs of the game have been scaled to be in $[0,1]$.
In an $\eps$-Nash equilibrium, the expected payoff of any player is at most $\eps$ less than the best-response payoff. In an $\eps$-well-supported Nash equilibrium, any pure strategy that is played with a positive probability, should have a payoff of at most $\eps$ less than the best-response payoff. It is easy to see that the latter is a stronger approximation concept, as 
any $\eps$-well-supported Nash equilibrium is an $\eps$-Nash equilibrium, but the other direction does not always hold.

The research efforts on these two notions have not evolved equally well. For $\eps$-Nash equilibria, there was a steady progress, starting with a $\frac{3}{4}$-approximation in \cite{KPS06}, and gradual improvements to $\frac{1}{2}$ by \cite{DMP09}, $0.382$ in \cite{DMP07}, $0.364$ by \cite{BBM10}, and $(0.3393+\delta)$ by \cite{TS08}, for any constant $\delta>0$. Finally, a very recent improvement by \cite{DFM22}, has obtained the currently best approximation of $(\frac{1}{3}+\delta)$, for any constant $\delta>0$. On the other hand, the first result for $\eps$-well-supported Nash equilibria was given by Kontogiannis and Spirakis \cite{KS-wsne}, with a $\frac{2}{3}$-approximation. This was obtained by a quite ``clean'' and easy to implement algorithm, requiring only a single call to a linear programming solver. After this, the progress on this front has been very incremental and with algorithms of growing complexity. A refinement of the algorithm by \cite{KS-wsne} was analyzed in \cite{FGSS16-approximate-WSNE}, obtaining a $0.6608$-approximation. Later on, a more involved algorithm was provided by \cite{czumaj2019distributed-wsne}, achieving
an additional improvement to 0.6528. The guarantee of 0.6528 has been the state of the art for the last 7 years~\cite{distibuted-arxiv}.

For both approximation notions, it is well known that there exist quasi-polynomial time algorithms \cite{LMM03,KS-wsne} for any constant $\eps >0$, running in time $n^{O(\log{n}/\epsilon^2)}$, for a game with $n$ available pure strategies per player. Quite interestingly, for well-supported equilibria, there exists an improved quasi-polynomial time algorithm for computing a $(\frac{1}{2}+\delta)$-approximation, for any constant $\delta>0$ \cite{FM19}, making the exponent dependent on $\log{\log{n}}$, rather than $\log{n}$. This can be viewed as supporting evidence that further improvements on the running time might be feasible for obtaining a $(\frac{1}{2}+\delta)$-well-supported approximation, and even a polynomial time algorithm could be anticipated.

Finally, regarding impossibility results, we know by the work of Rubinstein~\cite{Rub16-PETH} that, assuming the exponential-time hypothesis for PPAD, then $\eps$-Nash equilibria, and consequently $\eps$-well-supported equilibria as well, require quasi-polynomial time, when $\eps$ is below some very small, yet unspecified, constant. 
This rules out a PTAS for both approximation notions. But especially for well-supported equilibria, it also leaves open a very large gap from the 0.6528-approximation.

\subsection{Our contribution}
We improve upon the state of the art on the computation of approximate well-supported Nash equilibria in two respects. 
Firstly, we derive a polynomial-time algorithm (Algorithm 1) that, for every constant $\delta$, computes a $(\frac{1}{2}+\delta)$-WSNE. 
Our algorithm is more intuitive and conceptually much simpler compared to the algorithm of~\cite{czumaj2019distributed-wsne} that finds a $0.6528$-WSNE.
Then, we use  Algorithm 1 as a basis for a query-efficient algorithm (Algorithm 2) that achieves the same approximation guarantee.

\paragraph{\bf Algorithm 1 (main result):} 
To provide a high level description, the first step in Algorithm 1 is to compute a Nash equilibrium in the zero-sum games $(R, -R)$ and $(-C,C)$, where $R$ and $C$ are the payoff matrices of the row and column player respectively.
These equilibrium-strategies, and their corresponding payoffs, are utilized to partition the remaining analysis into three cases that the algorithm considers sequentially. The first and easier case occurs when the solutions of the two zero-sum games induce a strategy profile, where the maximum possible payoff for each player under the chosen strategy of their opponent is ``low'', i.e., it is bounded by $\frac{1}{2}$. This directly yields a $1/2$-approximation. The remaining two cases (referred to as the low-high and the high-high cases) are more interesting and in these cases we utilize either a linear system solver, or a support enumeration argument, to construct the desirable strategy profile.
 
\paragraph{\bf Algorithm 2: query-efficient algorithm.} In order to convert Algorithm 1 into a query-efficient protocol we utilize the fact that \eps-WSNEs in zero-sum games can be found with $O(\frac{n\cdot\log n}{\eps^2})$ queries. Then, using the approximate well-supported Nash equilirbia we found, we resort to {\em sampling} in order to create a subgame with $O(\log n)$ rows and $O(n)$ columns (or the other way around, depending on the case we are handling). This subgame is guaranteed to have a strategy profile that is a $(\frac{1}{2}+\delta+3\eps)$-WSNE for the original game. In order to find such a profile the algorithm simply queries the whole subgame and it proceeds as our main algorithm.

\subsection{Further related work}
Apart from results on general two-player games, there have been several studies focusing on the approximability of Nash equilibria in specific classes of games. For instance, for
 constant-rank games, where the matrix defined by the sum of the two payoff matrices has constant rank~\cite{adsul2011rank1,kannan2010-rank-games,Mehta18-constant-rank}; for win-lose games \cite{chen2007-win-lose,codenotti2005-win-lose,liu2018-win-lose}; for sparse games \cite{chen2006-sparse,Barman18-caratheodory-qptas}, for imitation games ~\cite{mclennan2010-imitation,mclennan2010simple-imitation,MurhekarMehta20-imitation}; for random games~\cite{barany2007nash-random,panagopoulou2014random}; for symmetric games \cite{CFJ14,KS11} and for games with symmetric payoff matrices \cite{CFJ17a}. 

There have also been results on new quasi-polynomial algorithms (QPTAS). We mention three new QPTASs that have been obtained:~\cite{Barman18-caratheodory-qptas} gave a refined, parameterized, approximation scheme;~\cite{BabichenkoBP17} gave a QPTAS that can be applied to multi-player games as well;~\cite{DeligkasFMS22-eps-ETR} gave a more general approach for approximation schemes for the existential theory of the reals.
Furthermore, more negative results for \eps-NE were derived:~\cite{KothariM18-sum-of-squares} gave an unconditional lower bound, based on the sum of squares hierarchy;~\cite{BoodaghiansBHR20-Smoothed-2Nash} proved PPAD-hardness in the smoothed analysis setting;~\cite{austrin2011inapproximability,BravermanKW15-QP-LB,DeligkasFS18-constrained-QP-LB} gave quasi-polynomial time lower bounds for constrained \eps-NE, under the exponential time hypothesis.
\section{Preliminaries}
\label{sec:prelims}

In what follows, for any natural number $n$, we denote by $[n]$ the set $\{1, 2, \ldots, n\}$ and by $\simplex$ the $(n-1)$-dimensional simplex.  
An $n \times n$ {\em bimatrix} game $(R,C)$ is defined by two {\em payoff} matrices $R$ and $C$ of size $n \times n$ each: $R$ defines the payoffs of the {\em row} player and $C$ defines the payoffs of the {\em column} player. 
When the row player picks a row $i \in [n]$ and the column player picks a column $j \in [n]$, then they receive a payoff of $R_{ij}$ and $C_{ij}$, respectively. 
We follow the usual assumption in the relevant literature that the matrices are normalized, so that all entries are in $[0, 1]$.

A {\em mixed strategy} is a probability distribution over $[n]$. We use $\xbf \in \simplex$ to denote a mixed strategy for the row player and $\ybf \in \simplex$ to denote a mixed strategy for the column player. 
If \xbf and \ybf are mixed strategies for the row and the column player respectively, then we call $(\xbf,\ybf)$ a {\em strategy profile}. We denote by $e_i$ the $n$-dimensional vector that has 1 at index $i$ and 0 elsewhere. Hence, $e_i$ corresponds to a pure strategy, where a player assigns probability one to play the pure strategy $i$. A useful notion in equilibrium computation is the {\em support} of a strategy \xbf, denoted by $\supp(\xbf)$, which is the set of pure strategies that are played with positive probability under \xbf. Formally, $\supp(\xbf):= \{i \in [n]: \xbf(i) > 0 \}$.

Given a strategy profile $(\xbf,\ybf)$, the expected payoff of the row player is $\rpay{\xbf}{\ybf}:= \xbf^TR\ybf$, and the expected payoff of the column player is $\cpay{\xbf}{\ybf}:=\xbf^TC\ybf$. Hence, for a pure strategy $e_i$, the term $\rpay{e_i}{\ybf}:=\sum_j  R_{ij}y_j$, denotes the expected payoff of the row player, when she plays the pure strategy $i$ against strategy \ybf of the column player. Similarly, $\cpay{\xbf}{e_j}$ is the expected payoff of the column player when she plays the pure strategy $j$ against \xbf. 
A pure strategy is a {\em best-response strategy} for a player against a chosen strategy of her opponent, if it maximizes her expected payoff; a pure strategy is an $\eps$-best response, for $\eps \in [0,1]$, if it achieves a payoff that is at most \eps\, less than the maximum possible payoff against the opponent's chosen strategy.

\medskip
\noindent {\bf Well-supported Nash equilibria.} 
A strategy profile $(\xbf,\ybf)$ is an \eps-{\em well-supported Nash equilibrium}, henceforth \eps-WSNE, for some $\eps \in [0,1]$, if every player plays with positive probability only pure strategies that are \eps-best responses. 
Put formally, $(\xbf, \ybf)$ is an \eps-WSNE if the following conditions are satisfied:
\begin{align*}
    \rpay{e_i}{\ybf} \geq \max_{k \in [n]} \rpay{e_k}{\ybf} - \eps \quad \text{for all}~i \in \supp(\xbf);\\
    \cpay{\xbf}{e_j} \geq \max_{k \in [n]} \cpay{\xbf}{e_k} - \eps \quad \text{for all}~j \in \supp(\ybf).
\end{align*}
If $\eps=0$, then 0-WSNE is an {\em exact} Nash equilibrium.

\medskip
\noindent {\bf $k$-uniform strategies.} 
For any natural number $k$, a {\em $k$-uniform strategy} plays each pure strategy with probability that is a multiple of $\frac{1}{k}$; i.e. $\xbf$ is $k$-uniform if for every $i \in [n]$ it holds that $\xbf(i) \in \{0, \frac{1}{k}, \frac{2}{k}, \ldots, \frac{k-1}{k}, 1\}$. If both $\xbf$ and $\ybf$ are $k$-uniform strategies, then we will say that $(\xbf, \ybf)$ is a $k$-uniform strategy profile. 
We will use $k$-uniform strategies for different values of $k$ in several parts of our algorithms. 
Observe that for any fixed value of $k$, the size of the support of a $k$-uniform strategy is at most $k$. This implies that in a $n \times n$ bimatrix game, there are at most $O(n^k)$ $k$-uniform strategy profiles. In addition, $k$-uniform strategies are used to produce {\em sampled strategies} as explained below.

\medskip
\noindent {\bf Sampled strategies.}
Given a strategy $\xbf$, a $k$-uniform {\em sampled strategy} $\xbf_s$ is produced by taking $k$ independent samples from $[n]$ according to $\xbf$, and by setting $\xbf_s(i)$ equal to the frequency with which pure strategy $i$ was sampled. Observe that $\supp(\xbf_s) \subseteq \supp(\xbf)$. 
Sampled strategies can approximate well any arbitrary strategy if enough samples are taken; roughly, if the sampled strategy has to satisfy $m$ constraints with an $\eps$-additive error, then $O(\frac{\log m}{\poly(\eps)})$ samples suffice. 

\medskip
\noindent
Next, we list some lemmas that involve $k$-uniform sampled strategies that will be used later in the analysis of our algorithms. 

\begin{lemma}[Implied by the proof of Theorem 1 of \cite{LMM03}]
\label{lem:uniform-approximation}
Let $(\xbf, \ybf)$ be a strategy profile of an $n \times n$ bimatrix game. Then, there exist $O(\frac{\log n}{\eps^2})$-uniform sampled strategies $\xbf_s, \ybf_s$ such that:
\begin{align*}
    |\rpay{\xbf}{\ybf} - \rpay{\xbf_s}{\ybf_s}| \leq \eps \quad & \text{and} \quad
    |\rpay{e_i}{\ybf} - \rpay{e_i}{\ybf_s}| \leq \eps \quad \forall i \in [n];\\
    |\cpay{\xbf}{\ybf} - \cpay{\xbf_s}{\ybf_s}| \leq \eps \quad & \text{and} \quad |\cpay{\xbf}{e_j} - \cpay{\xbf_s}{e_j}| \leq \eps \quad \forall j \in [n].
\end{align*}
\end{lemma}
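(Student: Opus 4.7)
The plan is to use the probabilistic method in the style of Lipton--Markakis--Mehta: sample $k = \Theta(\log n / \eps^2)$ pure strategies independently from each of $\xbf$ and $\ybf$, apply Hoeffding's inequality to each of the $O(n)$ linear and two bilinear deviations, and conclude by a union bound. Concretely, I would fix $k = c \log n / \eps^2$ for a sufficiently large absolute constant $c$, draw i.i.d.\ samples $i_1,\dots,i_k$ from $\xbf$ and $j_1,\dots,j_k$ from $\ybf$, and let $\xbf_s,\ybf_s$ be the resulting empirical distributions. These are $k$-uniform by construction and have support contained in $\supp(\xbf)$ and $\supp(\ybf)$ respectively.

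For each pure row strategy $e_i$, the random variable
\[
\rpay{e_i}{\ybf_s} \;=\; \frac{1}{k}\sum_{t=1}^{k} R_{i,j_t}
\]
is an average of $k$ i.i.d.\ variables in $[0,1]$ with mean $\rpay{e_i}{\ybf}$, so Hoeffding's inequality gives
\[
\PPr{|\rpay{e_i}{\ybf_s}-\rpay{e_i}{\ybf}| > \eps} \;\le\; 2e^{-2k\eps^2},
\]
and the symmetric bound holds for each deviation $|\cpay{\xbf_s}{e_j}-\cpay{\xbf}{e_j}|$ using the independent samples $i_1,\dots,i_k$.

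The bilinear terms are more delicate because both $\xbf_s$ and $\ybf_s$ are random. I would decompose via the triangle inequality
\[
|\rpay{\xbf_s}{\ybf_s}-\rpay{\xbf}{\ybf}| \;\le\; |\rpay{\xbf_s}{\ybf_s}-\rpay{\xbf_s}{\ybf}| \;+\; |\rpay{\xbf_s}{\ybf}-\rpay{\xbf}{\ybf}|.
\]
The second summand equals $\frac{1}{k}\sum_t \rpay{e_{i_t}}{\ybf}$ minus its mean, so Hoeffding applies directly. For the first summand I would condition on $\xbf_s$: given $\xbf_s$, the quantity $\rpay{\xbf_s}{\ybf_s} = \frac{1}{k}\sum_t \rpay{\xbf_s}{e_{j_t}}$ is, once more, an average of $k$ i.i.d.\ $[0,1]$-valued random variables with conditional mean $\rpay{\xbf_s}{\ybf}$, so Hoeffding applies conditionally and hence unconditionally. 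Requiring each of the two pieces to be at most $\eps/2$ only inflates the constant $c$ by a factor of $4$. The column-player bilinear bound is completely symmetric.

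Finally, I would union-bound over the $4n+4$ bad events (the $2n$ pure-strategy deviations plus four pieces of the two bilinear deviations); choosing $c$ large enough makes the total failure probability strictly less than $1$, so by the probabilistic method a sample realization that simultaneously satisfies every inequality exists, proving the claim. The only genuinely non-routine step is the bilinear deviation: since $\rpay{\xbf_s}{\ybf_s}$ is a product of two correlated empirical distributions, it is \emph{not} directly an average of independent random variables, and the triangle-split together with conditioning on one side at a time is what makes Hoeffding applicable. Everything else is bookkeeping on the constant hidden in the $O(\cdot)$.
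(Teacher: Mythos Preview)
Your proposal is correct and faithfully reconstructs the Lipton--Markakis--Mehta argument that the paper invokes by citation rather than reproves: the paper states Lemma~\ref{lem:uniform-approximation} as ``implied by the proof of Theorem~1 of \cite{LMM03}'' and gives no proof of its own, so your Hoeffding-plus-union-bound derivation is exactly what is being pointed to. One tiny remark: since the row samples and the column samples are drawn independently, $\xbf_s$ and $\ybf_s$ are in fact independent (not ``correlated''); the reason the bilinear term needs the triangle split and conditioning is simply that it is not a sum of i.i.d.\ terms, which you handle correctly.
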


\begin{lemma}
\label{lem:queries-sampled}
Let $(\xbf, \ybf)$ be an $\eps$-WSNE of an $n \times n$ bimatrix game and let $\xbf_s, \ybf_s$ be $O(\frac{\log n}{\eps^2})$-uniform strategies sampled from $\xbf$ and $\ybf$ respectively. Then, $(\xbf_s, \ybf_s)$ is a $3\eps$-WSNE for the game.
\end{lemma}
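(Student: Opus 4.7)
The plan is to combine the support inclusion $\supp(\xbf_s)\subseteq\supp(\xbf)$ (and analogously for $\ybf_s$), which comes for free from the definition of a sampled strategy, with the pointwise payoff approximation guaranteed by Lemma~\ref{lem:uniform-approximation}. The idea is that each inequality defining the WSNE condition will pick up at most an additive $\eps$ from replacing $\ybf$ by $\ybf_s$ on the LHS, and at most another additive $\eps$ from replacing $\ybf$ by $\ybf_s$ inside the maximum on the RHS, and the original $\eps$ slack coming from the $\eps$-WSNE property of $(\xbf,\ybf)$ accounts for the third.

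Concretely, I would first fix a pure strategy $i\in\supp(\xbf_s)$. Since sampling only uses pure strategies in $\supp(\xbf)$, we have $i\in\supp(\xbf)$, so the $\eps$-WSNE hypothesis gives $\rpay{e_i}{\ybf}\geq \max_{k\in[n]}\rpay{e_k}{\ybf}-\eps$. Applying Lemma~\ref{lem:uniform-approximation} (the pure-vs-sampled bound on the row payoffs) yields $\rpay{e_i}{\ybf_s}\geq\rpay{e_i}{\ybf}-\eps$. Next, for the quantity inside the maximum, I would let $k^*\in\argmax_k\rpay{e_k}{\ybf_s}$ and use Lemma~\ref{lem:uniform-approximation} again to obtain $\rpay{e_{k^*}}{\ybf_s}\leq \rpay{e_{k^*}}{\ybf}+\eps\leq \max_{k\in[n]}\rpay{e_k}{\ybf}+\eps$. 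Chaining these three inequalities gives
\[
\rpay{e_i}{\ybf_s}\ \geq\ \rpay{e_i}{\ybf}-\eps\ \geq\ \max_{k\in[n]}\rpay{e_k}{\ybf}-2\eps\ \geq\ \max_{k\in[n]}\rpay{e_k}{\ybf_s}-3\eps,
\]
which is the row-player WSNE condition for $(\xbf_s,\ybf_s)$ with parameter $3\eps$. The column-player condition is entirely symmetric: replace $R$ by $C$, swap the roles of the two players, and apply the column-payoff bound of Lemma~\ref{lem:uniform-approximation} twice in the same way, using $\supp(\ybf_s)\subseteq\supp(\ybf)$.

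There is no real obstacle here; the lemma is essentially bookkeeping on top of Lemma~\ref{lem:uniform-approximation}. The only point worth stating carefully in the final write-up is the support inclusion $\supp(\xbf_s)\subseteq\supp(\xbf)$ (already noted in the preliminaries), because without it one cannot invoke the $\eps$-best-response guarantee for $i\in\supp(\xbf_s)$. Everything else is just a triangle-inequality-style accumulation of three $\eps$ terms.
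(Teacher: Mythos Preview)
Your proposal is correct and follows essentially the same approach as the paper's own proof: both use the support inclusion $\supp(\xbf_s)\subseteq\supp(\xbf)$ to invoke the $\eps$-WSNE condition at $i$, then apply Lemma~\ref{lem:uniform-approximation} once on the left-hand side and once on the right-hand side (via the maximizer $k^*$ of $\rpay{e_k}{\ybf_s}$) to accumulate the three $\eps$ terms.
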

\begin{proof}
Assume that we are considering the bimatrix game $(R,C)$. We will focus on the row player, since the analysis for the second player is identical. Recall that since $\xbf_s$ is sampled from $\xbf$ it holds that $\supp(\xbf_s) \subseteq \supp(\xbf)$. So, consider an $i \in \supp(\xbf_s)$:
\begin{align*}
    \rpay{e_i}{\ybf_s} & \geq \rpay{e_i}{\ybf} - \eps &\text{(From Lemma~\ref{lem:uniform-approximation})}\\
    & \geq \max_{k \in [n]} \rpay{e_k}{\ybf} - 2\eps & \text{(Since $(\xbf, \ybf)$ is an $\eps$-WSNE)}\\ 
    & \geq \rpay{e_{k^*}}{\ybf} - 2\eps & \text{($k^* \in \argmax_k{\rpay{e_{k}}{\ybf_s}}$)}\\ 
& \geq \rpay{e_{k^*}}{\ybf_s} - 3\eps &\text{(From Lemma~\ref{lem:uniform-approximation}).}\\ 
& = \max_{k \in [n]} \rpay{e_k}{\ybf_s} - 3\eps &
\end{align*}
\end{proof}

\section{The algorithm}
In this section we provide our main algorithm, that computes a $(\frac{1}{2}+\delta)$-WSNE in polynomial time. This algorithm will form the basis for obtaining an additional algorithm that still computes a $(\frac{1}{2}+\delta)$-WSNE and is also query efficient.

\begin{tcolorbox}[title={Algorithm 1: Main Algorithm for $(\frac{1}{2}+\delta)$-WSNE}]
{\bf Input:} A bimatrix game $(R,C)$ and constants $\delta \in (0,1],$ and $\kappa(\delta) = \left \lceil \frac{2\ln{(1/\delta)}}{\delta^2} \right \rceil$.\\
\medskip
{\bf Output:} A $(\frac{1}{2}+\delta)$-WSNE for $(R,C)$.\\
\medskip
{\bf 1.} Compute a Nash equilibrium $(\xbfs,\ybfs)$ for the zero-sum game $(R,-R)$.\\
\medskip
{\bf 2.} Compute a Nash equilibrium $(\xbfh,\ybfh)$ for the zero-sum game $(-C,C)$.\\
{\bf 3.} If $\rpay{\xbfs}{\ybfs} \geq \cpay{\xbfh}{\ybfh}$:
\begin{itemize}
    \item[{\bf (a)}] If $\rpay{\xbfs}{\ybfs} \leq \frac{1}{2}$, then return $(\xbfh, \ybfs)$. 
    \item[{\bf (b)}] Else, if there exists a strategy $\xbfp$ such that $\supp(\xbfp) \subseteq \supp(\xbfs)$, and $\cpay{\xbfp}{e_j} \leq \frac{1}{2}$ for every $j \in [n]$, then return $(\xbfp, \ybfs)$. 
    \item[{\bf(c)}] Else, exhaustively search over all $\kappa(\delta)$-uniform strategy profiles and find a $(\frac{1}{2}+\delta)$-WNSE.
\end{itemize}
{\bf 4.} Else if $\cpay{\xbfh}{\ybfh} > \rpay{\xbfs}{\ybfs}$:
\begin{itemize}
    \item[{\bf (a)}] If $\cpay{\xbfh}{\ybfh} \leq \frac{1}{2}$, then return $(\xbfh, \ybfs)$. 
    \item[{\bf (b)}] Else if there exists a strategy $\ybfp$ such that $\supp(\ybfp) \subseteq \supp(\ybfh)$, and $\rpay{e_i}{\ybfp} \leq \frac{1}{2}$ for every $i \in [n]$, then return $(\xbfh, \ybfp)$. 
    \item[{\bf(c)}] Else, exhaustively search over all $\kappa(\delta)$-uniform strategy profiles and find a $(\frac{1}{2}+\delta)$-WNSE.
\end{itemize}
\end{tcolorbox}

\begin{theorem}
\label{thm:main}
For any constant $\delta > 0$, Algorithm 1 computes in polynomial time a $(\frac{1}{2}+\delta)$-WSNE.
\end{theorem}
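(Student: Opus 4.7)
The plan is to first verify the polynomial running time and then establish correctness case by case, focusing on Step~3 of the algorithm; Step~4 is symmetric with the two players' roles swapped. For the running time: Steps~1 and~2 each solve a zero-sum game, which reduces to one linear program; the test in Step~3(b) is itself a linear feasibility problem in the variables $\xbfp$; and Step~3(c) enumerates at most $n^{\kappa(\delta)}$ profiles, which is polynomial in $n$ because $\kappa(\delta) = \lceil 2\ln(1/\delta)/\delta^2\rceil$ is a constant depending only on $\delta$.

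For Case~3(a) I would set $v_R := \rpay{\xbfs}{\ybfs}$ and $v_C := \cpay{\xbfh}{\ybfh}$ and use the minimax property of $(R,-R)$ to get $\max_k \rpay{e_k}{\ybfs} \leq v_R$, and symmetrically $\max_k \cpay{\xbfh}{e_k} \leq v_C$. Under the hypothesis $v_R \leq \tfrac{1}{2}$ (which forces $v_C \leq v_R \leq \tfrac{1}{2}$) every payoff lies in $[0,1]$, so both players have regret at most $\tfrac{1}{2}$ in the profile $(\xbfh,\ybfs)$, yielding a $\tfrac{1}{2}$-WSNE. For Case~3(b) I would invoke the support property of the zero-sum NE $(\xbfs,\ybfs)$: every $i \in \supp(\xbfs) \supseteq \supp(\xbfp)$ satisfies $\rpay{e_i}{\ybfs} = v_R = \max_k \rpay{e_k}{\ybfs}$, so the row-regret in $(\xbfp,\ybfs)$ is zero, while the column-regret is at most $\tfrac{1}{2}$ by the defining condition on $\xbfp$.

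Case~3(c) is the main obstacle. Since the algorithm enumerates all $\kappa(\delta)$-uniform profiles, it suffices to exhibit at least one that is a $(\tfrac{1}{2}+\delta)$-WSNE. The failure of the 3(b) feasibility test is, by LP duality (equivalently, the minimax theorem applied to the zero-sum subgame of $C$ on $\Delta(\supp(\xbfs)) \times [n]$), equivalent to the existence of a column strategy $\ybf^{**}$ with $\cpay{e_i}{\ybf^{**}} > \tfrac{1}{2}$ for every $i \in \supp(\xbfs)$. Combined with the zero-sum support identity $\rpay{e_i}{\ybfs} = v_R > \tfrac{1}{2}$ on the same set, this gives two certificates that simultaneously lower-bound row- and column-payoffs on $\supp(\xbfs)$. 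The plan is to combine these witnesses into an exact strategy profile supported on $\supp(\xbfs)$ on the row side, and then invoke the Hoeffding-type concentration underlying Lemma~\ref{lem:uniform-approximation}: the choice $\kappa(\delta) = \lceil 2\ln(1/\delta)/\delta^2\rceil$ is calibrated precisely so that each sampled payoff estimate is within $\delta$ of its mean with probability at least $1 - 2\delta^4$, which permits a union bound over the $O(\kappa(\delta))$ constraints indexed by the supports of the sampled profile.

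The hard part will be specifying the exact profile to sample and identifying the constraints to union-bound over: $\ybf^{**}$ controls the column-side inequality only for rows in $\supp(\xbfs)$ while $\ybfs$ controls the row-side inequality on the same set, so any candidate profile must live inside $\supp(\xbfs)$ on the row side and must be designed so that the $\kappa(\delta)$ relevant payoff estimates (indexed by the sampled supports) are the only ones that need to be preserved — otherwise a naive union bound over $n$ pure strategies would fail for constant $\kappa(\delta)$. Once such a profile is fixed, its $\kappa(\delta)$-uniform sample lies among those tried by the exhaustive search and, by Hoeffding, satisfies the $(\tfrac{1}{2}+\delta)$-WSNE bounds with positive probability, which suffices to guarantee existence.
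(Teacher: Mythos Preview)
Your treatment of the running time and of Cases~3(a)--3(b) matches the paper. The genuine gap is in Case~3(c). You correctly derive from the infeasibility of 3(b), via LP duality/minimax on the $C$-subgame restricted to $\supp(\xbfs)$, the existence of a column witness $\ybf^{**}$ with $\cpay{e_i}{\ybf^{**}}>\tfrac12$ for every $i\in\supp(\xbfs)$, and you note that $\ybfs$ gives $\rpay{e_i}{\ybfs}=v_R>\tfrac12$ on the same set. But, as you yourself flag, you have not specified the profile to sample from, and this is not a detail that can be deferred. Two separate column witnesses $\ybfs$ and $\ybf^{**}$ do not combine into a single profile $(\xbf,\ybf)$ with the property the constant-sample argument actually needs, namely $\rpay{e_i}{\ybf}\ge\tfrac12$ for all $i\in\supp(\xbf)$ \emph{and} $\cpay{\xbf}{e_j}\ge\tfrac12$ for all $j\in\supp(\ybf)$ simultaneously; convex combinations of $\ybfs$ and $\ybf^{**}$ fail, and so does pairing $\xbfs$ with either one.

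The paper closes this gap with one additional idea: take $(\xbf,\ybf)$ to be any exact Nash equilibrium of the subgame $(R_{\xbfs},C_{\xbfs})$ obtained by restricting the row player to $\supp(\xbfs)$. Your duality observation (equivalently, that the value of $(-C_{\xbfs},C_{\xbfs})$ exceeds $\tfrac12$) together with $\rpay{\xbfs}{\cdot}>\tfrac12$ then force $\rpay{\xbf}{\ybf}>\tfrac12$ and $\cpay{\xbf}{\ybf}>\tfrac12$, since in an equilibrium neither player can profit by deviating to $\xbfs$ or to the column maximin strategy of the $C$-subgame. The equilibrium property now supplies exactly what was missing: every $i\in\supp(\xbf)$ has $\rpay{e_i}{\ybf}=\rpay{\xbf}{\ybf}>\tfrac12$, and every $j\in\supp(\ybf)$ has $\cpay{\xbf}{e_j}=\cpay{\xbf}{\ybf}>\tfrac12$. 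For the sampling step the paper does not invoke Lemma~\ref{lem:uniform-approximation} (which needs $O(\log n/\eps^2)$ samples to union-bound over all of $[n]$) but a separate constant-sample result, Lemma~\ref{thm:cfj14}, applied specifically to a Nash equilibrium: it yields a $\kappa(\delta)$-uniform $(\xbf_s,\ybf_s)$ with $\rpay{e_i}{\ybf_s}\ge\rpay{\xbf}{\ybf}-\delta$ on $\supp(\xbf_s)$ and the symmetric column bound. Your Hoeffding/union-bound intuition is in the right direction, but without the subgame-equilibrium step there is no profile whose per-support payoffs are all above $\tfrac12$, so the enumeration has nothing guaranteed to find.
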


Before proving Theorem \ref{thm:main}, we provide first a high level description.
Algorithm 1 partitions the solution space into three cases that considers then sequentially. 
If the algorithm fails to find a solution within a specific case, this rules out the existence of certain strategy profiles and proceeds to the next case, where it exploits the extra constraints that are imposed from the failure of the previous cases. 
Before reaching the case-analysis though, the algorithm computes a Nash equilibrium $(\xbfs, \ybfs)$ for the zero-sum game $(R, -R)$ and a Nash equilibrium $(\xbfh, \ybfh)$ for the zero-sum game $(-C,C)$.
These equilibrium-strategies, and their corresponding payoffs, are utilized to provide bounds for the payoffs in the game $(R,C)$. 
\begin{itemize}
    \item {\bf Case (a): low payoffs.} 
    Firstly, the algorithm checks if the solutions of the two zero-sum games can be used to produce a strategy profile, $(\xbfh, \ybfs)$, where the maximum payoff for every player under the chosen strategy of their opponent is ``low'', i.e., it is bounded by $\frac{1}{2}$. 
    If this is the case, then under the strategy profile $(\xbfh, \ybfs)$, the regret of {\em any} action is bounded by $\frac{1}{2}$, and we have a $\frac{1}{2}$-WSNE.
    \item {\bf Case (b): low-high payoffs.} If the first case fails, then the algorithm deduces that there is a strategy profile among $(\xbfs,\ybfs)$ and $(\xbfh, \ybfh)$, where at least one of the players is guaranteed a ``high'' payoff from every action they play with positive probability, i.e., every action in her support yields a payoff of at least $\frac{1}{2}$.
    Then, it formulates a linear-feasibility system that seeks a strategy for the high-payoff player yielding ``low'' maximum payoff for her opponent.
    For example, if the algorithm finds out that under $(\xbfs, \ybfs)$, the row player is guaranteed to get high payoff, it fixes the strategy $\ybfs$ for the column player, since every action that is assigned positive probability under $\xbfs$ is guaranteed to yield a payoff of at least $\frac{1}{2}$ for the row player. 
    Then, it formulates the linear-feasibility system that checks if there is a strategy $\xbfp$ that restricts the row player to place probability only on actions in the support of $\xbfs$ and at the same time bounds the maximum payoff of the column player by $\frac{1}{2}$. 
    If the linear system has indeed a feasible solution, the algorithm returns the profile $(\xbfp,\ybfs)$, and this will be a $\frac{1}{2}$-WSNE, by the preceding discussion. 
    \item {\bf Case (c): high payoffs.} If the previous two cases fail, then we can prove that there exists a strategy profile $(\xbf, \ybf)$, under which both players are playing only actions that yield a payoff of at least $\frac{1}{2}$, thus $(\xbf, \ybf)$ is a $\frac{1}{2}$-WSNE. 
    In order to prove this, we first define a ``subgame'' of $(R,C)$ that depends on the Nash equilibria we have computed for the zero-sum games $(R,-R)$ and $(-C,C)$. Then, using the properties of Nash equilibria, and the fact that the previous two cases of the algorithm failed, we prove that in {\em any} Nash equilibrium $(\xbf, \ybf)$ of this subgame, both players get a payoff of at least $\frac{1}{2}$.
    On the other hand though, it is not straightforward to find such a strategy profile. Luckily, we can show that we can approximate this by a strategy profile $(\wbf, \zbf)$, where each player plays with positive probability {\em constantly-many} actions and every such action yields a payoff of at least $\frac{1}{2}-\delta$. Thus, we obtain a $(\frac{1}{2} + \delta)$-WSNE by exhaustively searching over the strategy profiles with constant support.
\end{itemize}

\subsection{Proof of Theorem~\ref{thm:main}}
Firstly, observe that Steps 1 and 2 can be performed in polynomial since it is known that zero-sum games can be solved via a linear program. The remaining steps can also be performed in polynomial time, as we will argue in the sequel. 
In what follows we will assume that without loss of generality $\cpay{\xbfh}{\ybfh} \leq \rpay{\xbfs}{\ybfs}$ and thus we will focus on the analysis of Step 3, since the analysis of Step 4 is symmetric.

Before we proceed with the analysis of Steps 3(a) - 3(c), we state the following simple observation for Nash equilibria of zero-sum games.

\begin{observation}
\label{obs:0-sum}
Let $(\xbfs, \ybfs)$ be a Nash equilibrium of the zero-sum game $(R,-R)$. Then, for any other strategy $\ybf$ of the column player, we have:
$$ \rpay{\xbfs}{\ybfs} \leq  \rpay{\xbfs}{\ybf}. $$
Similarly, for an equilibrium $(\xbfh, \ybfh)$ of the zero-sum game $(-C,C)$ we have that for any strategy $\xbf$ of the row player:
$$ \cpay{\xbfh}{\ybfh} \leq  \cpay{\xbf}{\ybfh}. $$
\end{observation}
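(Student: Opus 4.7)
The plan is to derive both inequalities directly from the definition of Nash equilibrium, without invoking any minimax/duality machinery. The key point is that in the zero-sum game $(R,-R)$, the column player's payoff against a row strategy \xbf is $\xbf^T(-R)\ybf = -\rpay{\xbf}{\ybf}$, so maximizing her payoff is the same as minimizing $\rpay{\xbf}{\ybf}$ over $\ybf$. An analogous reversal holds for the row player in $(-C,C)$.

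For the first inequality, I would argue as follows. Since $(\xbfs,\ybfs)$ is a Nash equilibrium of $(R,-R)$, the strategy \ybfs\ is a best response for the column player against \xbfs; that is, \ybfs\ maximizes $-\rpay{\xbfs}{\ybf}$ over $\ybf \in \simplex$. Hence, for every $\ybf \in \simplex$,
$$-\rpay{\xbfs}{\ybfs} \;\geq\; -\rpay{\xbfs}{\ybf},$$
and multiplying through by $-1$ flips the inequality to yield $\rpay{\xbfs}{\ybfs} \leq \rpay{\xbfs}{\ybf}$, which is exactly the claim.

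The second inequality will follow by an entirely symmetric argument: in $(-C,C)$ the row player's payoff against \ybfh\ is $-\cpay{\xbf}{\ybfh}$, and the Nash condition that \xbfh\ is a best response to \ybfh\ gives $-\cpay{\xbfh}{\ybfh} \geq -\cpay{\xbf}{\ybfh}$ for every \xbf, hence $\cpay{\xbfh}{\ybfh} \leq \cpay{\xbf}{\ybfh}$. There is no genuine obstacle here; the only thing that demands care is tracking the sign flip correctly when translating between the auxiliary zero-sum games $(R,-R)$ and $(-C,C)$ used in Steps 1--2 of Algorithm~1 and the payoff notation $\rpay{\cdot}{\cdot}$, $\cpay{\cdot}{\cdot}$, which is defined with respect to the original matrices $R$ and $C$ rather than their negations.
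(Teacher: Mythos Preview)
Your proof is correct and essentially the same as the paper's: the paper invokes the max-min characterization of equilibrium strategies in a zero-sum game (that \xbfs\ guarantees at least the value against any \ybf), while you argue directly from the column player's best-response condition in $(R,-R)$; these are equivalent one-line observations.
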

\begin{proof}
This follows from the fact that when we have an equilibrium $(\xbfs, \ybfs)$ in a zero-sum game, the strategy $\xbfs$ is a max-min strategy, and therefore the row player can guarantee at least the value of the game by playing $\xbfs$, independently of the column player's choice of strategy. The second inequality is proved in the same way by applying the same argument for the column player in the zero-sum game $(-C,C)$.
\end{proof}

Now we are ready to prove the correctness of our algorithm. We will consider every case separately. For the cases 3(a) and 3(b), the analysis is rather straightforward.

\begin{lemma}[Case 3(a)]
\label{lem:3a}
If $\cpay{\xbfh}{\ybfh} \leq \rpay{\xbfs}{\ybfs} \leq \frac{1}{2}$, then $(\xbfh,\ybfs)$ is a $\frac{1}{2}$-WSNE for the game $(R,C)$.
\end{lemma}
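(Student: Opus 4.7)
The plan is to verify both WSNE conditions for the profile $(\xbfh, \ybfs)$ by showing that, under this profile, the maximum payoff achievable against the opponent's strategy is at most $\tfrac{1}{2}$ for each player. Because all payoffs lie in $[0,1]$, this upper bound of $\tfrac{1}{2}$ on the best-response payoff directly implies that the regret of any pure strategy (in particular every pure strategy in the current support) is at most $\tfrac{1}{2}$, giving the $\tfrac{1}{2}$-WSNE property essentially for free.

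For the row player against $\ybfs$, the key fact is the minmax counterpart of Observation~\ref{obs:0-sum}: in the zero-sum game $(R,-R)$, the equilibrium strategy $\ybfs$ is a min-max strategy for the column player, so it holds that $\rpay{\xbf}{\ybfs} \le \rpay{\xbfs}{\ybfs}$ for every mixed strategy $\xbf$ of the row player (this is the same max-min argument used in the proof of Observation~\ref{obs:0-sum}, just applied from the column player's point of view in $(R,-R)$). Specializing $\xbf$ to each pure strategy $e_i$ and using the hypothesis $\rpay{\xbfs}{\ybfs}\le \tfrac{1}{2}$, one concludes that $\max_{i\in[n]} \rpay{e_i}{\ybfs} \le \tfrac{1}{2}$. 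Since the payoff of any $i\in \supp(\xbfh)$ is nonnegative, the row player's regret under $(\xbfh,\ybfs)$ is at most $\tfrac{1}{2}$.

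The column player's condition is handled symmetrically using the zero-sum game $(-C,C)$: $\xbfh$ being a min-max strategy there yields $\cpay{\xbfh}{\ybf} \le \cpay{\xbfh}{\ybfh}$ for every $\ybf$, so $\max_{j\in[n]} \cpay{\xbfh}{e_j} \le \cpay{\xbfh}{\ybfh} \le \rpay{\xbfs}{\ybfs} \le \tfrac{1}{2}$, where the middle inequality is the case hypothesis of Step~3. Again, since payoffs are nonnegative, every $j\in \supp(\ybfs)$ has regret at most $\tfrac{1}{2}$. Putting the two bounds together shows $(\xbfh,\ybfs)$ is a $\tfrac{1}{2}$-WSNE.

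There is no substantive obstacle: the only point requiring care is invoking the correct direction of the minmax inequality, namely the ``mirror'' of Observation~\ref{obs:0-sum} that bounds the opponent's best-response payoff against the equilibrium strategy of a zero-sum game. Everything else reduces to the trivial observation that on $[0,1]$-scaled payoffs a best-response value of at most $\tfrac{1}{2}$ forces the regret of any pure strategy to be at most $\tfrac{1}{2}$.
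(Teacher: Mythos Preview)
Your proposal is correct and follows essentially the same approach as the paper: both arguments show that the best-response payoff against $\ybfs$ (resp.\ $\xbfh$) is bounded by $\rpay{\xbfs}{\ybfs}\le \tfrac{1}{2}$ (resp.\ $\cpay{\xbfh}{\ybfh}\le \tfrac{1}{2}$) via the minmax property of zero-sum equilibria, and then conclude that any opposing strategy satisfies the $\tfrac{1}{2}$-WSNE constraints because payoffs lie in $[0,1]$. The only cosmetic difference is that you explicitly flag the ``mirror'' of Observation~\ref{obs:0-sum}, whereas the paper invokes the same fact directly from the Nash equilibrium property.
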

\begin{proof}
Since $(\xbfs, \ybfs)$ is a Nash equilibrium of the game $(R,-R)$, we know that for every $i \in [n]$, it holds that $\rpay{e_i}{\ybfs} \leq \rpay{\xbfs}{\ybfs} \leq \frac{1}{2}$. Hence, the best-response payoff of the row player against $\ybfs$ is at most $\frac{1}{2}$, and thus, {\em any} strategy for the row player satisfies the constraints of $\frac{1}{2}$-WSNE, when the column player plays $\ybfs$. 
Similarly, for the column player we get that for every $j \in [n]$, $\cpay{\xbfh}{e_j} \leq \cpay{\xbfh}{\ybfh} \leq \rpay{\xbfs}{\ybfs} \leq \frac{1}{2}$. So, the best-response payoff of the column player against $\xbfh$ is at most $\frac{1}{2}$, and thus, any mixed strategy for the column player satisfies the constraints of $\frac{1}{2}$-WSNE, when the row player selects $\xbfh$.
\end{proof}

\begin{lemma}[Case 3(b)]
\label{lem:3b}
Suppose that $\cpay{\xbfh}{\ybfh} \leq \rpay{\xbfs}{\ybfs}$ and $\rpay{\xbfs}{\ybfs} > \frac{1}{2}$. If there exists a strategy $\xbfp$ with $\supp(\xbfp) \subseteq \supp(\xbfs)$ such that $\cpay{\xbfp}{e_j} \leq \frac{1}{2}$ for any $j \in [n]$, then $(\xbfp,\ybfs)$ is a $\frac{1}{2}$-WSNE for $(R,C)$.
In addition, we can find such a strategy $\xbfp$, or decide that no such strategy exists, in polynomial time by solving a system of linear equations.
\end{lemma}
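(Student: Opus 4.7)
The plan is to split the proof into two pieces: (i) verify that the profile $(\xbfp, \ybfs)$ meets the $\frac{1}{2}$-WSNE conditions for both players, and (ii) express the search for $\xbfp$ as a linear feasibility program solvable in polynomial time.

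For the correctness part, I would treat the two players separately. For the column player, the assumption $\cpay{\xbfp}{e_j} \leq \frac{1}{2}$ for every $j \in [n]$ immediately implies that the best-response payoff against $\xbfp$ satisfies $\max_{k \in [n]} \cpay{\xbfp}{e_k} \leq \frac{1}{2}$. Since every entry of $C$ lies in $[0,1]$, every pure strategy $j$ (in particular every $j \in \supp(\ybfs)$) has $\cpay{\xbfp}{e_j} \geq 0 \geq \max_k \cpay{\xbfp}{e_k} - \frac{1}{2}$, so the column-player condition holds trivially. For the row player, the key fact is that $(\xbfs, \ybfs)$ is a Nash equilibrium of the zero-sum game $(R,-R)$; standard zero-sum theory then gives $\rpay{e_i}{\ybfs} = \rpay{\xbfs}{\ybfs} = \max_{k \in [n]} \rpay{e_k}{\ybfs}$ for every $i \in \supp(\xbfs)$, since each pure strategy in the support of a zero-sum equilibrium strategy yields exactly the value of the game, and that value equals the best-response payoff against $\ybfs$. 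Because $\supp(\xbfp) \subseteq \supp(\xbfs)$, every pure strategy in $\supp(\xbfp)$ is an exact best response of the row player against $\ybfs$ in $(R,C)$, which in particular is a $\frac{1}{2}$-best response.

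For the algorithmic part, the conditions on $\xbfp$ translate directly into a linear feasibility system in the variables $\xbfp(1), \ldots, \xbfp(n)$: non-negativity $\xbfp(i) \geq 0$ for each $i \in [n]$; the support restriction $\xbfp(i) = 0$ whenever $i \notin \supp(\xbfs)$; the simplex constraint $\sum_i \xbfp(i) = 1$; and one linear inequality $\sum_i C_{ij}\, \xbfp(i) \leq \frac{1}{2}$ per column $j \in [n]$. This is an LP of polynomial size, so any polynomial-time LP solver either returns a feasible $\xbfp$ or certifies infeasibility. The argument contains no real technical obstacle; the only slightly delicate point worth highlighting explicitly is the appeal to the zero-sum best-response structure that pins $\rpay{e_i}{\ybfs}$ to the value of the game on $\supp(\xbfs)$, which is precisely what makes the row-player side of the $\frac{1}{2}$-WSNE condition hold automatically.
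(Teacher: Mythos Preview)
Your proposal is correct and follows essentially the same approach as the paper: you verify the column player's condition from the hypothesis $\cpay{\xbfp}{e_j}\le\tfrac12$, verify the row player's condition from the fact that every pure strategy in $\supp(\xbfs)$ is an exact best response to $\ybfs$ (since $(\xbfs,\ybfs)$ is a Nash equilibrium of $(R,-R)$), and then write down the same linear feasibility system for~$\xbfp$. The only cosmetic difference is that you phrase the row-player step via the zero-sum value, whereas the paper simply invokes the general best-response-support property of Nash equilibria; the content is identical.
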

\begin{proof}
Assume that such a strategy $\xbfp$ exists.
Recall that since $\xbfs$ is a best response against $\ybfs$, any pure strategy in $\supp(\xbfs)$ is also a best response against $\ybfs$. Thus, any $\xbfp$ with $\supp(\xbfp) \subseteq \supp(\xbfs)$, is a best-response strategy against $\ybfs$. Hence, the strategy profile $(\xbfp, \ybfs)$ satisfies the constraints of $\frac{1}{2}$-WSNE for the row player.

Now, for the column player, observe the following. Since the strategy $\xbfp$ guarantees that $\cpay{\xbfp}{e_j} \leq \frac{1}{2}$, for any $j\in [n]$, this means that the best-response payoff for the column player against $\xbfp$ is at most $\frac{1}{2}$. Thus, any mixed strategy for the column player satisfies the constraints of the $\frac{1}{2}$-WSNE, when the row player selects $\xbfp$.

Finally, it only remains to argue that we can decide in polynomial time if such a strategy exists, and if it does, to compute it efficiently. Indeed, this can be formulated as a feasibility system of linear equations. In particular, it corresponds to finding a solution for the linear system defined by: 
\begin{align*}
\cpay{\xbfp}{e_j} \leq \frac{1}{2} & \quad \text{for every $j \in [n]$}; \\ 
\xbfp(i) \geq 0 & \quad \text{for every $i \in \supp(\xbfs)$};\\
\xbfp(i) = 0 & \quad \text{for every $i \notin \supp(\xbfs)$};\\
\sum_i \xbfp(i) = 1.
\end{align*}
\end{proof}

\noindent
Next we focus on Case 3(c), whose correctness seems less intuitive compared to the other two cases. In order to argue about the correctness of the algorithm, we will focus on the subgame $(R_{\xbfs}, C_{\xbfs})$ of $(R,C)$ that has size $|\supp(\xbfs)| \times n$. 
More specifically, in $(R_{\xbfs}, C_{\xbfs})$ the row player will be constrained to use only pure strategies in $\supp(\xbfs)$, while the column player will be allowed to choose from her complete set of pure strategies. Then, we will prove that the bimatrix game $(R_{\xbfs}, C_{\xbfs})$ possesses a Nash equilibrium $(\xbf,\ybf)$ where both players get payoff at least $\frac{1}{2}$.

\begin{lemma}
\label{lem:3c-aux}
If Algorithm 1 reaches Case 3(c), then for every equilibrium profile $(\xbf,\ybf)$ of the bimatrix game $(R_{\xbfs}, C_{\xbfs})$, it holds that $\rpay{\xbf}{\ybf} > 1/2$ and $\cpay{\xbf}{\ybf} > 1/2$.
\end{lemma}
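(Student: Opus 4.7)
The plan is to establish the two strict inequalities $\rpay{\xbf}{\ybf} > 1/2$ and $\cpay{\xbf}{\ybf} > 1/2$ separately, by exploiting two different facts: (i) for the row player, the strategy $\xbfs$ is itself a legal strategy inside the subgame $(R_{\xbfs},C_{\xbfs})$, so it provides a lower bound via the best-response property; (ii) for the column player, the mere failure of Case 3(b) gives, by contrapositive, the inequality we need.

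For the row-player bound, I would argue as follows. Since $(\xbf,\ybf)$ is a Nash equilibrium of the subgame, $\xbf$ is a best response for the row player against $\ybf$ among all strategies whose support lies in $\supp(\xbfs)$. The strategy $\xbfs$ trivially satisfies this support constraint, hence it is available as a deviation, yielding
\[
\rpay{\xbf}{\ybf}\;\geq\;\rpay{\xbfs}{\ybf}.
\]
Observation~\ref{obs:0-sum} applied to the zero-sum game $(R,-R)$ then gives $\rpay{\xbfs}{\ybf}\geq \rpay{\xbfs}{\ybfs}$, and the fact that Algorithm~1 failed in Case 3(a) under the assumption $\cpay{\xbfh}{\ybfh}\leq \rpay{\xbfs}{\ybfs}$ forces $\rpay{\xbfs}{\ybfs} > 1/2$. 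Chaining these three inequalities produces $\rpay{\xbf}{\ybf} > 1/2$.

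For the column-player bound, the key observation is that in the subgame $(R_{\xbfs},C_{\xbfs})$ the column player retains her full strategy set $[n]$. Consequently, her best-response property at the equilibrium implies $\cpay{\xbf}{\ybf}=\max_{j\in[n]}\cpay{\xbf}{e_j}$. Now the row-player strategy $\xbf$ satisfies $\supp(\xbf)\subseteq \supp(\xbfs)$, so it is a candidate for the role of $\xbfp$ in the feasibility system of Case 3(b). Since Algorithm~1 reached Case 3(c), that system was infeasible, which means that no $\xbfp$ supported on $\supp(\xbfs)$ can bound all quantities $\cpay{\xbfp}{e_j}$ by $1/2$. Taking the contrapositive with $\xbfp := \xbf$, we conclude that some $j\in[n]$ satisfies $\cpay{\xbf}{e_j} > 1/2$, and hence $\cpay{\xbf}{\ybf} > 1/2$.

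I do not expect any real obstacle; the only subtlety is noticing that $\xbfs$ itself is a legal deviation in the subgame (so Observation~\ref{obs:0-sum} can be applied), and that the infeasibility of the Case 3(b) linear system, viewed as a universal statement over all row strategies supported on $\supp(\xbfs)$, directly covers the particular strategy $\xbf$ arising from the subgame equilibrium.
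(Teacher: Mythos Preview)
Your proof is correct. The row-player part matches the paper's argument essentially verbatim: both use that $\xbfs$ is a legal deviation in the subgame together with Observation~\ref{obs:0-sum} and the failure of Case~3(a).

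For the column-player bound, your route is genuinely different from, and more elementary than, the paper's. The paper introduces an auxiliary zero-sum game $(-C_{\xbfs},C_{\xbfs})$, argues that its value exceeds $1/2$ (otherwise the row-minimizer's equilibrium strategy would witness feasibility of Case~3(b)), and then uses the column-player's max-min strategy $\tilde{\ybf}$ of that game as a profitable deviation whenever $\cpay{\xbf}{\ybf}\leq 1/2$. You bypass this construction entirely: since the column player keeps her full action set in the subgame, $\cpay{\xbf}{\ybf}=\max_{j}\cpay{\xbf}{e_j}$, and the infeasibility of Case~3(b) applied directly to $\xbfp:=\xbf$ forces this maximum above $1/2$. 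Your argument is shorter and avoids invoking minimax; the paper's argument, on the other hand, produces an explicit deviating strategy $\tilde{\ybf}$ that works uniformly against \emph{every} row strategy supported in $\supp(\xbfs)$, which is not needed for the lemma but is a slightly stronger byproduct.
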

\begin{proof}
Let $(\xbf, \ybf)$ be a Nash equilibrium of the bimatrix game $(R_{\xbfs}, C_{\xbfs})$. 
Firstly, we will prove that $\rpay{\xbf}{\ybf} > 1/2$. Since the algorithm has reached Case 3(c), it must be true that $\rpay{\xbfs}{\ybfs} > 1/2$. Thus, from Observation~\ref{obs:0-sum}, we get that $\rpay{\xbfs}{\ybf} > 1/2$ for every possible $\ybf$. Hence, it must be true that $\rpay{\xbf}{\ybf} > 1/2$ since otherwise the row player could deviate to $\xbfs$, which is a valid strategy for the game $(R_{\xbfs}, C_{\xbfs})$, and increase her payoff; this would contradict the assumption that $(\xbf, \ybf)$ is a Nash equilibrium.

Before we prove that $\cpay{\xbf}{\ybf} > \frac{1}{2}$, we will prove an intermediate claim. Consider the zero-sum game $(-C_{\xbfs}, C_{\xbfs})$ and let $(\tilde{\xbf}, \tilde{\ybf})$ be a Nash equilibrium for this game. We argue that 
\begin{align}
\label{eq:aux-zs}
\cpay{\tilde{\xbf}}{\tilde{\ybf}} > \frac{1}{2}.    
\end{align}
For the sake of contradiction, assume that $\cpay{\tilde{\xbf}}{\tilde{\ybf}} \leq \frac{1}{2}$. Then, since $(\tilde{\xbf}, \tilde{\ybf})$ is a Nash equilibrium of the game $(-C_{\xbfs}, C_{\xbfs})$, we get that
$\cpay{\tilde{\xbf}}{e_j} \leq \frac{1}{2}$ for every $j \in [n]$. But this would mean that Algorithm 1 satisfies the conditions from Case 3(b) and contradicts the fact that the algorithm reached Case 3(c).

Now we are ready to prove that $\cpay{\xbf}{\ybf} > \frac{1}{2}$. To obtain a contradiction, assume that $\cpay{\xbf}{\ybf} \leq \frac{1}{2}$. From Equation \eqref{eq:aux-zs} and Observation~\ref{obs:0-sum}, we get that $\cpay{\xbf}{\tilde{\ybf}} > \frac{1}{2}$. Hence the column player can deviate from $\ybf$ to $\tilde{\ybf}$ and increase her payoff. But this would contradict the assumption that $(\xbf, \ybf)$ is a Nash equilibrium of the game $(R_{\xbfs}, C_{\xbfs})$.
\end{proof}

\noindent
In addition to Lemma~\ref{lem:3c-aux}, we will need the following lemma that uses $k$-uniform sampled strategies for {\em constant} values of $k$, which was proved in~\cite{CFJ14}.

\begin{lemma}[Theorem 2 from \cite{CFJ14}]
\label{thm:cfj14}
Let $(\xbf,\ybf)$ be a Nash equilibrium for a bimatrix game $(R,C)$. Then, for any constant $\delta > 0$, there exists a $\kappa(\delta)$-uniform strategy profile $(\xbf_s, \ybf_s)$, with  $\kappa(\delta) = \left \lceil \frac{2\ln{(1/\delta)}}{\delta^2} \right \rceil$, where $\xbf_s$ is sampled from \xbf, and $\ybf_s$ is sampled from \ybf such that:
\begin{enumerate}
    \item $\rpay{e_i}{\ybf_s} \geq \rpay{\xbf}{\ybf}-\delta$, for every $i \in \supp(\xbf_s)$;
    \label{enum:1}
    \item $\cpay{\xbf_s}{e_j} \geq \cpay{\xbf}{\ybf}-\delta$, for every $j \in \supp(\ybf_s)$;
    \label{enum:2}
\end{enumerate}
\end{lemma}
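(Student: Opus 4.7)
The plan is to establish existence via the probabilistic method. I would independently draw $\kappa(\delta)$ i.i.d.\ pure strategies from $\xbf$ to form the sampled strategy $\xbf_s$, and independently draw $\kappa(\delta)$ i.i.d.\ pure strategies from $\ybf$ to form $\ybf_s$, and show that with strictly positive probability over these two independent samplings both conditions of the lemma are satisfied. The essential input is the Nash equilibrium property of $(\xbf,\ybf)$: every pure strategy in $\supp(\xbf)$ is a best response against $\ybf$, so $\rpay{e_i}{\ybf} = \rpay{\xbf}{\ybf}$ for every $i \in \supp(\xbf)$, and symmetrically $\cpay{\xbf}{e_j} = \cpay{\xbf}{\ybf}$ for every $j \in \supp(\ybf)$.

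For any single fixed $i \in \supp(\xbf)$, the quantity $\rpay{e_i}{\ybf_s}$ equals the empirical mean of $\kappa(\delta)$ i.i.d.\ $[0,1]$-bounded random variables whose common expectation is exactly $\rpay{\xbf}{\ybf}$. Hoeffding's inequality therefore yields
\[
\PPr{\rpay{e_i}{\ybf_s} < \rpay{\xbf}{\ybf} - \delta} \leq e^{-2\kappa(\delta)\delta^2} \leq \delta^4,
\]
where the final inequality uses the stated choice $\kappa(\delta) = \left\lceil 2\ln(1/\delta)/\delta^2 \right\rceil$.

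The main subtlety, and the step I expect to require the most care, is the handling of quantification: the first condition of the lemma has to hold for every $i \in \supp(\xbf_s)$, but the set $\supp(\xbf_s)$ is itself random. A naive union bound over all of $\supp(\xbf)$ would force $\kappa$ to grow like $\log n/\delta^2$, which would defeat the purpose of this $n$-free lemma. To avoid this, I would exploit the independence between $\xbf_s$ and $\ybf_s$: for every $i \in \supp(\xbf)$, the event $\{i \in \supp(\xbf_s)\}$ depends only on $\xbf_s$ and satisfies $\PPr{i \in \supp(\xbf_s)} \leq \kappa(\delta) \cdot \xbf(i)$ by a union bound over the $\kappa(\delta)$ draws, whereas the bad-payoff event $\{\rpay{e_i}{\ybf_s} < \rpay{\xbf}{\ybf} - \delta\}$ depends only on $\ybf_s$ and has probability at most $\delta^4$ by the Hoeffding bound above. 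By independence and summing over $i \in \supp(\xbf)$, the total probability that the first condition fails is at most $\sum_{i} \kappa(\delta)\, \xbf(i)\, \delta^4 = \kappa(\delta)\, \delta^4$.

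The same argument applies symmetrically to bound the failure probability of the second condition by $\kappa(\delta)\,\delta^4$, and a union bound then shows that a single realization $(\xbf_s, \ybf_s)$ satisfies both conditions simultaneously with probability at least $1 - 2\kappa(\delta)\,\delta^4$. This quantity is strictly positive for the stated $\kappa(\delta)$ once $\delta$ is below an absolute constant (and the lemma is trivial when $\delta$ is close to $1$, since payoffs lie in $[0,1]$), so at least one such realization must exist, proving the claim.
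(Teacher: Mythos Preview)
The paper does not supply its own proof of this lemma; it is quoted verbatim as Theorem~2 of~\cite{CFJ14}. Your proposal is correct and is, in fact, essentially the argument given in~\cite{CFJ14}: the probabilistic method via Hoeffding, together with the key trick of using the independence of $\xbf_s$ and $\ybf_s$ so that the union bound over $i\in\supp(\xbf_s)$ is weighted by $\PPr{i\in\supp(\xbf_s)}\le\kappa(\delta)\,\xbf(i)$, which sums to $\kappa(\delta)$ rather than to $|\supp(\xbf)|$ and thereby removes any dependence on $n$. Your handling of the edge case $\delta$ close to $1$ (where $2\kappa(\delta)\delta^4$ need not be below $1$) by noting that the conclusion is then vacuous since payoffs lie in $[0,1]$ is also fine.
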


Lemma~\ref{thm:cfj14} tells us two things that are useful for our purposes. Firstly, Points~\ref{enum:1}-\ref{enum:2} provide an approximation guarantee for the strategy profile $(\xbf_s,\ybf_s)$ as a well-supported NE. Then, the fact that both $\xbf_s$ and $\ybf_s$ are $\kappa(\delta)$-uniform tells us that we can find this profile by enumeration in polynomial time. The lemma below summarizes the above and it completes the proof of correctness of our algorithm.

\begin{lemma}[Case 3(c)]
\label{lem:3c}
If Algorithm 1 reaches Case 3(c), then we can compute in polynomial time a $\kappa(\delta)$-uniform strategy profile $(\wbf,\zbf)$, with $\kappa(\delta) = \left \lceil \frac{2\ln{(1/\delta)}}{\delta^2} \right \rceil$, that is a $(\frac{1}{2}+\delta)$-WSNE for the game $(R,C)$. 
\end{lemma}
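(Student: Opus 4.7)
The plan is to combine Lemma~\ref{lem:3c-aux} with the sampling result of Lemma~\ref{thm:cfj14} to obtain a $\kappa(\delta)$-uniform profile that is a $(\frac{1}{2}+\delta)$-WSNE of the full game $(R,C)$, and then to locate it by brute-force enumeration.

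First, I would invoke Lemma~\ref{lem:3c-aux}: any Nash equilibrium $(\xbf,\ybf)$ of the subgame $(R_{\xbfs}, C_{\xbfs})$ satisfies $\rpay{\xbf}{\ybf} > \frac{1}{2}$ and $\cpay{\xbf}{\ybf} > \frac{1}{2}$. Then, applying Lemma~\ref{thm:cfj14} to such an equilibrium (treated as a Nash equilibrium of the subgame), there exists a $\kappa(\delta)$-uniform profile $(\wbf,\zbf)$ with $\wbf$ sampled from $\xbf$ and $\zbf$ from $\ybf$, such that $\rpay{e_i}{\zbf} > \frac{1}{2} - \delta$ for every $i \in \supp(\wbf)$, and $\cpay{\wbf}{e_j} > \frac{1}{2} - \delta$ for every $j \in \supp(\zbf)$.

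Next, I would argue that $(\wbf,\zbf)$ is a $(\frac{1}{2}+\delta)$-WSNE of the \emph{full} game $(R,C)$, not merely of the subgame. Because $\supp(\wbf) \subseteq \supp(\xbf) \subseteq \supp(\xbfs) \subseteq [n]$, the strategy $\wbf$ is a valid strategy of the full game, and $\zbf$ trivially is too. Since the payoff matrices are normalized to $[0,1]$, the best-response payoff of either player against the opponent's strategy is at most $1$, even when optimizing over the full action set $[n]$. Combined with the lower bound $\frac{1}{2} - \delta$ on the payoff of each pure action in $\supp(\wbf)$ and $\supp(\zbf)$, each such action falls short of the best response by at most $1 - (\frac{1}{2} - \delta) = \frac{1}{2} + \delta$, which is exactly the $(\frac{1}{2}+\delta)$-WSNE condition.

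Finally, for the algorithmic claim, Algorithm 1 need not compute $(\xbf,\ybf)$ explicitly; it enumerates all $\kappa(\delta)$-uniform strategy profiles and returns the first one whose two WSNE inequalities hold with slack $\frac{1}{2}+\delta$. The number of such profiles is at most $O\!\left(n^{2\kappa(\delta)}\right)$, which is polynomial in $n$ for constant $\delta$, and verifying the WSNE inequalities for a fixed profile is clearly polynomial-time. The argument above guarantees that at least one such profile exists, so the search is guaranteed to succeed. The only conceptual subtlety I anticipate is the transition from the approximation relative to the subgame value provided by Lemma~\ref{thm:cfj14} to a WSNE guarantee in the full game; this is handled cleanly by the $[0,1]$-normalization, which bounds the full-game best response by $1$ regardless of how strategies outside $\supp(\xbfs)$ might perform for the row player.
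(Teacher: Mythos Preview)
Your proposal is correct and follows essentially the same approach as the paper: combine Lemma~\ref{lem:3c-aux} with Lemma~\ref{thm:cfj14} to obtain the existence of a $\kappa(\delta)$-uniform $(\frac{1}{2}+\delta)$-WSNE, use the $[0,1]$ normalization to lift the guarantee to the full game, and then find it by enumeration. The paper's proof is slightly terser but structurally identical; your explicit treatment of the subgame-to-full-game transition and the enumeration bound $O(n^{2\kappa(\delta)})$ are both fine.
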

\begin{proof}
The proof of the lemma follows from the combination of Lemma~\ref{lem:3c-aux} and Lemma~\ref{thm:cfj14}. 
From Lemma~\ref{lem:3c-aux}, we get that there exists a strategy profile $(\xbf, \ybf)$, that is a Nash equilibrium for the subgame $(R_{\xbfs},C_{\xbfs})$, and additionally under this strategy profile both players get payoff at least $\frac{1}{2}$.
Then, Lemma~\ref{thm:cfj14}, applied to the profile $(\xbf, \ybf)$ for the game $(R_{\xbfs},C_{\xbfs})$, implies the existence of a strategy profile $(\wbf,\zbf)$, such that:
\begin{itemize}
    \item $\rpay{e_i}{\zbf} \geq \frac{1}{2}-\delta$ for every $i \in \supp(\wbf)$ and $\cpay{\wbf}{e_j} \geq \frac{1}{2}-\delta$ for every $j \in \supp(\zbf)$;
\item $\supp(\wbf) \subseteq \supp(\xbfs)$.
\end{itemize}
This profile satisfies all the constraints we want. 
Observe that the first bullet above implies that $(\wbf,\zbf)$ is a $(\frac{1}{2}+\delta)$-WSNE for the game $(R, C)$; this is straightforward to see since the maximum payoff is at most 1. 
Thus, it only remains to prove that we can compute this strategy profile in polynomial time. This is again easy to see: there are $O(n^{\kappa(\delta)})= n^{O\left( \frac{\ln{(1/\delta)}}{\delta^2} \right)}$ $\kappa(\delta)$-uniform strategy profiles, which is polynomially bounded, for every constant $\delta > 0$, and we can check if such a profile is a $(\frac{1}{2}+\delta)$-WSNE in polynomial time.
\end{proof}

\section{A Query Efficient Algorithm}
\label{sec:query}

In this section we consider the query complexity of the problem and we show how Algorithm 1 can be converted to a query-efficient algorithm that achieves the same approximation for well-supported Nash equilibria using $O(n\cdot \log n)$ payoff queries.

\paragraph{\bf Query Complexity.} In the setting, we assume that the algorithm only knows that the players will play an $n \times n$ game $(R,C)$, but it does not know any of the payoff entries of $R$ and $C$. The algorithm learns the payoff entries via {\em payoff queries}: if the algorithm queries the pure strategy profile $(e_i,e_j)$, it is told the payoffs $R_{ij}$ and $C_{ij}$ the players get under this strategy profile. After each payoff query, the algorithm can perform arbitrary computations --our algorithm will perform polynomial-time computations-- in order to decide the next strategy profile to query (if any). In the end, the algorithm outputs a mixed strategy profile $(\xbf,\ybf)$. The goal is to ensure that $(\xbf, \ybf)$ has a good approximation guarantee, while keeping the number of queries as small as possible.

\medskip
\noindent
In order to prove our main result for this section, we will use the following lemma, proven by Fearnley and Savani~\cite{FS2016-queries}, which essentially shows that with high probability we can find an $\eps$-WSNE of a zero-sum game with $O(\frac{n\cdot \log n}{\eps^4})$ payoff queries.

\begin{lemma}
\label{lem:queries-zero-sum}[Corollary 5.5 from~\cite{FS2016-queries}]
Given an $n \times n$ zero-sum game,
with probability at least $(1-n^{-\frac{1}{8}})(1-\frac{2}{n})^2$, we can compute an $\eps$-WSNE $(\xbf,\ybf)$ using $O(\frac{n\cdot \log n}{\eps^4})$ payoff queries.
\end{lemma}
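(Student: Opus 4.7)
The plan is to combine random sampling of the payoff matrix with a linear programming solver that exactly solves the resulting small zero-sum subgame, and then to argue via Hoeffding-type concentration that this exact subgame equilibrium yields an $\eps$-WSNE of the original game with the stated probability.

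\textbf{Sampling step.} First, I would draw a random multiset $S \subseteq [n]$ of size $k = \Theta(\log n / \eps^2)$ by taking independent uniform samples of column indices, and query every entry $R_{ij}$ for $i \in [n]$ and $j \in S$. This costs $O(nk) = O(n \log n / \eps^2)$ queries; performing the symmetric sampling of rows adds another $O(n \log n / \eps^2)$ queries. The $\eps^{-4}$ in the final bound (rather than $\eps^{-2}$) is where I expect to pay for an additional refinement pass needed to promote an approximate NE to the stronger WSNE guarantee.

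\textbf{Solving the subgame.} On the queried submatrices I would solve the corresponding zero-sum LP and extract a strategy pair $(\xbf, \ybf)$; by the minimax theorem, $(\xbf, \ybf)$ is an exact Nash equilibrium of the sampled rectangular game. The key step is transferring this guarantee back to $(R,-R)$. By a standard Hoeffding bound, for every fixed pure row $e_i$ the payoff of $e_i$ against any fixed mixed column strategy is approximated to within $\eps$ by its value against a $k$-uniform sample except with probability $\exp(-\Omega(k \eps^2))$; choosing $k = \Theta(\log n / \eps^2)$ and union-bounding over all $n$ pure rows controls this uniformly for all $i \in [n]$. Two such independent events (one per player) are what I expect produces the $(1-2/n)^2$ factor in the lemma.

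\textbf{From approximate NE to WSNE.} The main obstacle is bridging the gap between a Nash equilibrium of the subgame (which, by the above concentration, is only an $O(\eps)$-\emph{approximate} NE of the full game) and the stronger $\eps$-WSNE notion, which demands that \emph{every} pure action in the support be an $\eps$-best response. I would prune the LP solution by discarding any pure strategy in $\supp(\xbf)$ whose empirical payoff against $\ybf$ falls more than $\eps/2$ below the empirical best-response value, then re-solve on the restricted support. Verifying via a second, independent sample of size $\Theta(\log n / \eps^2)$ that no surviving support action violates the $\eps$-best response condition against the true payoffs costs another $O(n \log n / \eps^2)$ queries and I expect contributes the residual $(1 - n^{-1/8})$ failure probability. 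Compounding these two layers of $\Theta(\log n / \eps^2)$-sized samples against $n$ pure strategies yields the stated $O(n \log n / \eps^4)$ total query complexity. The hardest part of the analysis, I anticipate, is showing that the pruning does not inflate the opponent's best-response payoff beyond $\eps$, since removing mass from $\xbf$ can shift the column player's optimal reply; this is where the structural properties of zero-sum games (specifically, LP duality and the fact that the value is achieved on the row player's support) should be essential.
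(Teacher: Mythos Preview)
The paper does not prove this lemma; it is quoted as Corollary~5.5 of Fearnley and Savani~\cite{FS2016-queries} and used as a black box, so there is no in-paper argument to compare against. What follows concerns the soundness of your reconstruction.

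Your sampling step has a genuine gap. You draw the column multiset $S$ \emph{uniformly} from $[n]$ and then invoke Hoeffding to claim that, for each pure row $e_i$, the payoff against ``any fixed mixed column strategy'' is approximated by its value on the sample. But Hoeffding controls the empirical mean of $k$ samples drawn \emph{from} a given distribution; a uniform sample of columns only approximates payoffs against the \emph{uniform} column strategy, not against an arbitrary one. Concretely, if the column player's equilibrium in $(R,-R)$ places all mass on a single column $j^*$, a uniform draw of $k=\Theta(\log n/\eps^2)$ columns misses $j^*$ with probability $(1-1/n)^k \approx 1 - k/n$, and the LP solution on the sampled subgame can be arbitrarily far from any equilibrium of the full game. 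Hence the transfer from subgame equilibrium to full-game $\eps$-NE does not go through, and the subsequent pruning-to-WSNE step has nothing correct to prune.

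The Fearnley--Savani algorithm does not solve a static random subgame. It runs a no-regret (multiplicative-weights) dynamic for $\Theta(\log n/\eps^2)$ rounds; in each round it estimates the current payoff vector by sampling from the opponent's \emph{current} mixed strategy and querying the corresponding entries, which is exactly the setting in which Hoeffding applies. The time-average is then an $O(\eps)$-NE of the full zero-sum game, and a separate estimation pass at a finer scale (accounting for the extra $1/\eps^2$ and the $(1-n^{-1/8})$ factor) prunes the supports to obtain an $\eps$-WSNE. Your intuition about the two-stage structure and the source of the $\eps^{-4}$ is on target, but the first stage must be adaptive rather than a one-shot uniform sample.
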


While Lemma~\ref{lem:queries-zero-sum} suffices to simulate Steps 1 and 2, and Case 3(a) of Algorithm 1, there is no obvious way to implement the remaining cases in a query-efficient way. 
One idea would be to perform an exhaustive search under the constraint that we consider only the subgame $(R_{\xbfs}, C_{\xbfs})$, where $\xbfs$ is the strategy computed at Step 1.  
This would not break the correctness of the algorithm, since from Lemmas~\ref{lem:3b} and~\ref{lem:3c-aux} we know that it suffices to focus on this subgame if the algorithm reaches Cases 3(b) and 3(c). 
This idea would indeed allow us to simulate Case 3(b), and Case 3(c), with $O(n\cdot \log n)$ payoff queries, if $(R_{\xbfs}, C_{\xbfs})$ had $O(n\cdot \log n)$ payoff entries; for example, everything would work fine if $|\supp(\xbfs)| = O(\log n)$. 
However, Lemma~\ref{lem:queries-zero-sum} does not provide any bound on the support size of $\xbf$; in fact it may even return a fully-mixed strategy for every player. Luckily though, we can sample an $O(\log n)$-uniform strategy $\xbfs_s$ that approximates well enough $\xbfs$. 

We describe formally below all the adjustments that need to be done to Algorithm 1. 

\begin{tcolorbox}[title={Algorithm 2: Query-Efficient implementation of Algorithm 1}]
{\bf Input:} A bimatrix game $(R,C)$ and constants $\eps, \delta \in (0,1],$ and $\kappa(\delta) = \left \lceil \frac{2\ln{(1/\delta)}}{\delta^2} \right \rceil$.\\
\medskip
{\bf Output:} A $(\frac{1}{2}+3\eps+\delta)$-WSNE for $(R,C)$ using $O(\frac{n\cdot \log n}{ \eps^4})$ payoff queries.\\
\medskip
{\bf 1.} Use Lemma~\ref{lem:queries-zero-sum} to compute an \eps-WSNE $(\xbfs,\ybfs)$ for the zero-sum game $(R,-R)$.\\
\medskip
{\bf 2.} Use Lemma~\ref{lem:queries-zero-sum} to compute an \eps-WSNE $(\xbfh,\ybfh)$ for the zero-sum game $(-C,C)$.\\
{\bf 3.} If $\rpay{\xbfs}{\ybfs} \geq \cpay{\xbfh}{\ybfh}$, use Lemma~\ref{lem:uniform-approximation} to sample the strategy profile $(\xbfs_s, \ybfs_s)$.\\
 \hspace*{3mm} Query all strategy profiles $(e_i,e_j)$, with $i \in \supp(\xbfs_s)$, and create the subgame $(R_{\xbfs_s}, C_{\xbfs_s})$.
\begin{itemize}
    \item[{\bf (a)}] If $\rpay{\xbfs}{\ybfs} \leq \frac{1}{2}$, then return $(\xbfh, \ybfs)$. 
    \item[{\bf (b)}] Else if there exists a strategy $\xbfp$ such that $\supp(\xbfp) \subseteq \supp(\xbfs_s)$ and $\cpay{\xbfp}{e_j} \leq \frac{1}{2}$ for every $j \in [n]$, then return $(\xbfp, \ybfs_s)$. 
    \item[{\bf(c)}] Else, exhaustively search over $\kappa(\delta)$-uniform strategy profiles of $(R_{\xbfs_s}, C_{\xbfs_s})$ and find a $(\frac{1}{2}+3\eps+\delta)$-WSNE of $(R,C)$. 
\end{itemize}
{\bf 4.} Else if $\cpay{\xbfh}{\ybfh} > \rpay{\xbfs}{\ybfs}$, use Lemma~\ref{lem:uniform-approximation} to sample the strategy profile $(\xbfh_s, \ybfh_s)$.\\
 \hspace*{3mm} Query all strategy profiles $(e_i,e_j)$, with $j \in \supp(\ybfh_s)$, and create the subgame $(R_{\xbfh_s}, C_{\xbfh_s})$.
\begin{itemize}
    \item[{\bf (a)}] If $\cpay{\xbfh}{\ybfh} \leq \frac{1}{2}$, then return $(\xbfh, \ybfs)$. 
    \item[{\bf (b)}] Else if there exists a strategy $\ybfp$ such that $\supp(\ybfp) \subseteq \supp(\ybfh_s)$ and $\rpay{e_i}{\ybfp} \leq \frac{1}{2}$ for every $i \in [n]$, then return $(\xbfh_s, \ybfp)$. 
   \item[{\bf(c)}] Else, exhaustively search over $\kappa(\delta)$-uniform strategy profiles of $(R_{\xbfh_s}, C_{\xbfh_s})$ and find a $(\frac{1}{2}+3\eps+\delta)$-WSNE for $(R,C)$.
\end{itemize}
\end{tcolorbox}

\begin{theorem}
\label{thm:query-alg}
For any constants, $\eps, \delta > 0$, Algorithm 2 computes in polynomial time, and with probability at least $(1-n^{-\frac{1}{8}})(1-\frac{2}{n})^2$, a $(\frac{1}{2}+3\eps+\delta)$-WSNE  using $O(\frac{n\cdot \log n}{\eps^4})$ payoff queries.
\end{theorem}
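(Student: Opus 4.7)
The plan is to parallel the proof of Theorem~\ref{thm:main}, tracking two additional sources of error introduced in Algorithm~2: (i) at Steps 1--2 we only have $\eps$-WSNEs of the two zero-sum games (via Lemma~\ref{lem:queries-zero-sum}) rather than exact equilibria, and (ii) we work with the $O(\log n/\eps^2)$-uniform sampled strategy $\xbfs_s$ in place of $\xbfs$, so that the induced subgame $(R_{\xbfs_s},C_{\xbfs_s})$ has only $O(\log n/\eps^2)$ rows and can therefore be queried in full within the budget.

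For the query complexity, the two invocations of Lemma~\ref{lem:queries-zero-sum} contribute $O(n\log n/\eps^4)$ queries each, and querying all entries of $(R_{\xbfs_s},C_{\xbfs_s})$ costs a further $|\supp(\xbfs_s)|\cdot n = O(n\log n/\eps^2)$ queries, so the total is $O(n\log n/\eps^4)$. The LP feasibility test of Case~(b) and the $O(n^{\kappa(\delta)})$ enumeration of Case~(c) use no further queries and run in polynomial time for constant $\delta$.

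For correctness I would re-prove analogs of Lemmas~\ref{lem:3a}--\ref{lem:3c}. Case~(a) loses only an extra additive $\eps$: the $\eps$-WSNE property of $(\xbfs,\ybfs)$ in $(R,-R)$ yields $\max_k \rpay{e_k}{\ybfs}\leq \rpay{\xbfs}{\ybfs}+\eps\leq 1/2+\eps$, and symmetrically for the column player against $\xbfh$, so $(\xbfh,\ybfs)$ is a $(1/2+\eps)$-WSNE. Case~(b) checks $\cpay{\xbfp}{e_j}\leq 1/2$ directly from the queried entries, so the column player's best response against $\xbfp$ is at most $1/2$; for the row player, pure strategies in $\supp(\xbfp)\subseteq\supp(\xbfs)$ are $\eps$-best responses against $\ybfs$, and the computation of Lemma~\ref{lem:queries-sampled} transfers this to $\ybfs_s$ with total loss $3\eps$, making $(\xbfp,\ybfs_s)$ a $(1/2+3\eps)$-WSNE.

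The main obstacle is Case~(c), for which the plan is to extend Lemma~\ref{lem:3c-aux} to show that every Nash equilibrium $(\xbf,\ybf)$ of the queried subgame $(R_{\xbfs_s},C_{\xbfs_s})$ satisfies $\rpay{\xbf}{\ybf}>1/2-2\eps$ and $\cpay{\xbf}{\ybf}>1/2$. The row-side bound follows because $\xbfs_s$ is a valid row strategy of the subgame: combining the failure of Case~(a), the approximate min--max inequality implied by the $\eps$-WSNE property of $(\xbfs,\ybfs)$, and the one-sided concentration $|\rpay{\xbfs}{e_j}-\rpay{\xbfs_s}{e_j}|\leq\eps$ implicit in Lemma~\ref{lem:uniform-approximation}, we obtain $\rpay{\xbfs_s}{\ybf}\geq \rpay{\xbfs}{\ybfs}-2\eps>1/2-2\eps$ for every $\ybf$. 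The column-side bound mirrors the original argument: if the value of the queried zero-sum game $(-C_{\xbfs_s},C_{\xbfs_s})$ were at most $1/2$, its optimal row strategy would satisfy the requirements of Case~(b), a contradiction; hence the value exceeds $1/2$, and Observation~\ref{obs:0-sum} forces $\cpay{\xbf}{\ybf}>1/2$ at every subgame NE. Lemma~\ref{thm:cfj14} then produces a $\kappa(\delta)$-uniform profile $(\wbf,\zbf)$ in which every pure strategy in the support earns at least $1/2-2\eps-\delta$ in the original $n\times n$ game (all relevant payoffs are already queried since $\supp(\wbf)\subseteq\supp(\xbfs_s)$); since payoffs lie in $[0,1]$, this is a $(1/2+2\eps+\delta)$-WSNE, comfortably within the claimed bound, and the brute-force enumeration over the subgame is guaranteed to find one. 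The stated success probability is inherited from the two applications of Lemma~\ref{lem:queries-zero-sum} together with a standard Hoeffding/union-bound guarantee on the sampling step.
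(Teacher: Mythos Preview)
Your proposal is correct and follows essentially the same approach as the paper: bound the query complexity by combining Lemma~\ref{lem:queries-zero-sum} with the $O(\log n/\eps^2)$ row bound on the sampled subgame, then re-run the three-case analysis of Theorem~\ref{thm:main} while tracking the losses from (i) the $\eps$-WSNEs in Steps~1--2 and (ii) the sampling step. Your Case~(c) argument is in fact slightly sharper than the paper's---you obtain $\rpay{\xbf}{\ybf}>1/2-2\eps$ and $\cpay{\xbf}{\ybf}>1/2$ for any NE of $(R_{\xbfs_s},C_{\xbfs_s})$, whereas the paper only claims $>1/2-3\eps$ on both sides---and your observation that the sampling step also contributes to the failure probability is a point the paper's own proof glosses over.
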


Note that in order to compute a $(1/2+\delta')$-WSNE for some given $\delta'$, it suffices to set the constants $\eps, \delta$ of Algorithm 2, so that $\delta' = 3\eps + \delta$, e.g., using $\eps = \delta = \delta'/4$ suffices.

Before we prove Theorem \ref{thm:query-alg}, we state first the following simple observation, which follows directly from the definition of 
$\eps$-WSNE.

\begin{observation}
\label{obs:aux}
Fix $\eps\geq 0$, and let $(\xbfs, \ybfs)$ be an \eps-WSNE of the zero-sum game $(R,-R)$. Then, we have: 
\begin{align}
    \label{eq:aux1}
    \rpay{\xbfs}{\ybfs} \geq  \rpay{e_i}{\ybfs} - \eps,~\forall i \in [n], \quad & \text{and} \quad \rpay{\xbfs}{\ybfs} - \eps \leq  \rpay{\xbfs}{e_j},~ \forall j \in [n].
\end{align}
Similarly, for any \eps-WSNE $(\xbfh, \ybfh)$ of the zero-sum game $(-C,C)$ we have:
\begin{align}
    \label{eq:aux2}
    \cpay{\hat{\xbf}}{\hat{\ybf}} - \eps \leq \cpay{e_i}{\hat{\ybf}},~ \forall i \in [n], \quad & \text{and} \quad \cpay{\hat{\xbf}}{\hat{\ybf}} \geq \cpay{\hat{\xbf}}{e_j} - \eps,~ \forall j \in [n].
\end{align}
\end{observation}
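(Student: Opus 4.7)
The plan is to derive the four inequalities of Observation~\ref{obs:aux} directly from the definition of $\eps$-WSNE specialized to zero-sum games; the two statements about $(\xbfh,\ybfh)$ are symmetric to the two about $(\xbfs,\ybfs)$, so I will focus on the latter and only sketch the former.

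First I will unfold the $\eps$-WSNE definition for the row player in the game $(R,-R)$: for every $i \in \supp(\xbfs)$, $\rpay{e_i}{\ybfs} \geq \max_{k \in [n]} \rpay{e_k}{\ybfs} - \eps$. Taking a convex combination weighted by $\xbfs$ then gives $\rpay{\xbfs}{\ybfs} = \sum_{i} \xbfs(i)\,\rpay{e_i}{\ybfs} \geq \max_{k \in [n]} \rpay{e_k}{\ybfs} - \eps$, and since the right-hand side dominates $\rpay{e_i}{\ybfs} - \eps$ for every $i \in [n]$, the first inequality of \eqref{eq:aux1} follows for all pure strategies of the row player, not just those in $\supp(\xbfs)$.

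For the second inequality of \eqref{eq:aux1}, I will apply the $\eps$-WSNE condition to the column player, whose payoff matrix in the zero-sum game is $-R$. The condition reads: for every $j \in \supp(\ybfs)$, $-\rpay{\xbfs}{e_j} \geq \max_{k \in [n]}\bigl(-\rpay{\xbfs}{e_k}\bigr) - \eps$, equivalently $\rpay{\xbfs}{e_j} \leq \min_{k \in [n]} \rpay{\xbfs}{e_k} + \eps$. Averaging over $\ybfs$ yields $\rpay{\xbfs}{\ybfs} \leq \min_{k \in [n]} \rpay{\xbfs}{e_k} + \eps \leq \rpay{\xbfs}{e_j} + \eps$ for every $j \in [n]$, which is exactly the desired bound.

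Finally, the two inequalities of \eqref{eq:aux2} for $(\xbfh,\ybfh)$ in the zero-sum game $(-C,C)$ follow by the same argument with the roles of the two players swapped: the $\eps$-WSNE condition for the column player, who now holds the maximizing matrix $C$, gives $\cpay{\xbfh}{\ybfh} \geq \cpay{\xbfh}{e_j} - \eps$, while the condition for the row player, whose payoff matrix is $-C$, gives $\cpay{\xbfh}{\ybfh} \leq \cpay{e_i}{\ybfh} + \eps$. There is no real obstacle here; the only thing to be careful about is the sign flip when translating the $\eps$-WSNE condition for the player whose payoff matrix is negated, which is why the first inequality of each line concerns deviations by the maximizer in that game and the second concerns deviations by the minimizer.
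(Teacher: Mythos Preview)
Your proof is correct and follows exactly the approach the paper intends: the paper does not spell out the argument at all, merely noting that the observation ``follows directly from the definition of $\eps$-WSNE,'' and your derivation is precisely that unfolding (applying the $\eps$-WSNE condition to each player and averaging with the mixed strategy to pass from support elements to all pure strategies). There is nothing to add.
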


\begin{proof}[Proof of Theorem~\ref{thm:query-alg}]
We begin by establishing the number of payoff queries the algorithm requires. 
Steps 1 and 2, according to Lemma~\ref{lem:queries-zero-sum}, require $O(\frac{n \cdot \log n}{\eps^4})$ queries each. 
In addition, Steps 3 and 4 require $O(\frac{n \cdot \log n}{\eps^2})$ queries each; this is because we know from Lemma~\ref{lem:uniform-approximation} that $|\supp(\xbfs_s)| = O(\frac{\log n}{\eps^2})$ and $|\supp(\xbfs_s)| = O(\frac{\log n}{\eps^2})$.

In order to prove the correctness of Algorithm 2, we will follow a similar approach as in the proof of Theorem~\ref{thm:main}. 
Without loss of generality assume that $\rpay{\xbfs}{\ybfs} \geq \cpay{\xbfh}{\ybfh}$, hence the algorithm will proceed according to Step 3. We will distinguish between the three possible cases.
\begin{itemize}
    \item {\bf Case 3(a) occurs.} Then, using exactly the same arguments as in Lemma~\ref{lem:3a}, and by Observation~\ref{obs:aux}, we get that $(\xbfh,\ybfs)$ is a $(\frac{1}{2}+\eps)$-WSNE, since $(\xbfs,\ybfs)$ and $(\xbfh,\ybfh)$ are $\eps$-WSNEs and not exact Nash equilibria.
    \item {\bf Case 3(b) occurs.} Then, we follow the proof of Lemma~\ref{lem:3b} and we get a $\frac{1}{2}$-WSNE. This is because we are using the sampled strategy profile $(\xbfs_s, \ybfs_s)$ which, according to Lemma~\ref{lem:uniform-approximation}, is a $3\eps$-WSNE. Thus, we get that \xbfp will satisfy the constraints of a $3\eps$-WSNE for the row player, while the linear system of Lemma~\ref{lem:3b} will satisfy the constraints of a $\frac{1}{2}$-WSNE.
    \item {\bf Case 3(c) occurs.} Here, we can reprove Lemma~\ref{lem:3c-aux}, using the subgame $(R_{\xbfs_s}, C_{\xbfs_s})$ this time, and get that this subgame possesses a Nash equilibrium $(\xbf, \ybf)$ such that $\rpay{\xbf}{\ybf} > \frac{1}{2}-3\eps$, and $\cpay{\xbf}{\ybf} > \frac{1}{2}-3\eps$; the loss of $3\eps$ comes from the use of the sampled strategies and Lemma~\ref{lem:uniform-approximation}. 
    Then, using verbatim the analysis of Lemma~\ref{lem:3c}, we get that there exists a $\kappa(\delta)$-uniform strategy profile in $(R_{\xbfs_s}, C_{\xbfs_s})$ that is a $(\frac{1}{2}+3\eps+\delta)$-WSNE for $(R,C)$.
\end{itemize}
\end{proof}

\section{Discussion}
\label{sec:discussion}

We have presented a new algorithm for computing a $\frac{1}{2}$-well-supported Nash equilibrium in bimatrix games. 
Our algorithm not only significantly improves the previously-best approximation guarantee, but it is conceptually much simpler and intuitive. As a byproduct, we have showed how we can convert our main algorithm into a query-efficient protocol that achieves the same approximation guarantee with $O(n\cdot \log n )$ queries.

Below we identify three orthogonal directions for future work.
\begin{itemize}
    \item Close the gap between polynomial-time upper bounds and the quasi-polynomial lower bound from~\cite{Rub16-PETH}. While we conjecture that 1/2 is not the limit of polynomial-time tractability for \eps-WSNE, we believe that new techniques and ideas will be required for achieving a better approximation. Indeed, the shifting-probabilities ideas that were used in~\cite{FGSS16-approximate-WSNE} and~\cite{czumaj2019distributed-wsne} had as a natural limit $\eps = 1/2$. Hence, some fresh ideas seem to be required to go below $\frac{1}{2}$.
    \item Prove a matching query-complexity lower bound. In~\cite{FS2016-queries} it was shown that any deterministic algorithm requires $\Omega(n^2)$ payoff queries to find an \eps-NE, and thus an \eps-WSNE, for every $\eps < 1/2$. Steps 1 and 2 from Algorithm 2 are randomized, thus the above-mentioned lower bound is not binding; in~\cite{FS2016-queries} it has been shown that for $\eps = 1/6n$, any randomized algorithm requires $\Omega(n^2)$ queries. We conjecture that any randomized algorithm that finds an \eps-WSNE, with $\eps < 1/2$, requires $\Omega(n^2)$ queries.
    \item Improve the approximation guarantee of $\eps$-WSNE with polylogarithmic communication. In the communication complexity model, introduced in~\cite{goldberg-pastink2014communication}, every player knows her own payoff matrix, but she does not know the entries in the payoff matrix of her opponent. The players have to follow a protocol that works in rounds and in every round the players exchange a bit of information. 
    Algorithm 1 can be converted to a communication protocol that finds a $(\frac{1}{2}+\delta)$-WSNE with $O(n\cdot \log n)$ communication, using a similar approach as in~\cite{czumaj2019distributed-wsne}.
    However, there is a fundamental barrier that does not allow us to implement it with polylogarithmic communication, which is the desirable bound in communication complexity.
    Although we can implement almost every step of the algorithm with $O(\log^2n)$ communication, the last step  seems to inevitably require $O(n)$ communication. In particular, the Case 3(c) of the algorithm can be easily reduced to the so-called \textsc{Disjointness} problem, which is known to be ``hard'' with respect to communication~\cite{roughgarden2016communication}. 
    \end{itemize}

\bibliographystyle{plain}
\bibliography{agt}

\end{document}